
\documentclass[letterpaper, 10 pt, conference]{ieeeconf}  

\IEEEoverridecommandlockouts                              
\overrideIEEEmargins

\usepackage[utf8]{inputenc}
\usepackage[T1]{fontenc}
\usepackage{amsmath}

\usepackage{amsthm}
\usepackage{graphicx}
\usepackage{subfig}
\usepackage[colorlinks=true, allcolors=blue]{hyperref}

\usepackage{amssymb}
\usepackage{caption}
\usepackage{titlesec}
\usepackage{thmtools}
\usepackage{mathptmx}

\newtheoremstyle{assumption}
  {0 pt}  
  {0 pt}  
  {\rmfamily\itshape} 
  {}  
  {\normalfont\bfseries} 
  {.}  
  { } 
  {}  
\theoremstyle{assumption}
\newtheorem{assumption}{Assumption}
\newtheorem{lemma}{Lemma}
\newtheorem{property}{Property}
\newtheorem{theorem}{Theorem}
\newtheorem{definition}{Definition}
\newcommand{\revise}[1]{\textcolor{black}{#1}}
\newcommand{\new}[1]{\textcolor{black}{#1}}

\setlength{\belowcaptionskip}{-15pt}
\titlespacing{\subsection}{0pt}{2pt plus 0.1pt minus 0.1pt}{1pt plus 0.1pt minus 0.1pt}
\setlength{\abovedisplayskip}{5pt}
\setlength{\belowdisplayskip}{5pt}
\allowdisplaybreaks
\title{\LARGE \bf
Singularity-Avoidance Control of Robotic Systems with Model Mismatch and Actuator Constraints
}


\author{Mingkun Wu, Alisa Rupenyan, \textit{Member, IEEE} and Burkhard Corves
\thanks{This work was supported in part by China Scholarship Council under Grant 202106250025. AR acknowledges funding support by the Joh. Jacob Rieter-Stiftung (\textit{Corresponding author:  Mingkun Wu})}
\thanks{Mingkun Wu and Burkhard Corves are with the Institute of Mechanism Theory, Machine Dynamics and Robotics, RWTH Aachen University, Aachen 52062, Germany. 
(e-mail: wu@igmr.rwth-aachen.de, corves@igmr.rwth-aachen.de)}
\thanks{Alisa Rupenyan is with ZHAW Centre for Artificial Intelligence, ZHAW Zürich University
of Applied Sciences, Winterthur 8401, Switzerland (e-mail: rupn@zhaw.ch)}}

\begin{document}

\maketitle
\thispagestyle{empty}
\pagestyle{empty}

\begin{abstract}
\revise{Singularities, manifesting as special configuration states, deteriorate robot performance and may even lead to a loss of control over the system. This paper addresses the kinematic singularity concerns in robotic systems with model mismatch and actuator constraints through control barrier functions (CBFs).} We propose a learning-based control strategy to prevent robots entering singularity regions. More precisely, we leverage Gaussian process (GP) regression to learn the unknown model mismatch, where the prediction error is restricted by a deterministic bound. Moreover, we offer the criteria for parameter selection to ensure the feasibility of CBFs subject to actuator constraints. The proposed approach is validated by high-fidelity simulations on a 2 degrees-of-freedom (DoFs) planar robot.

\end{abstract}

\section{INTRODUCTION}

Robots are becoming increasingly prevalent across various industries, such as robotic arms used in industrial production and parallel-mechanism based legged robots. Singularities, arising from specific geometric \revise{relationships} between links, can cause robots to lose or gain one or more DoFs, potentially leading to a loss of control over the system. Therefore, avoiding singular configurations is crucial to ensure safe operation for robotic systems. Directly modifying the reference trajectories is an effective method to avoid singularities. \revise{For example, a linear weighting method based post-processing non-singular trajectory generation method was proposed for a 5-DoFs hybrid machining robot in \cite{li2023effective}.} An algorithm based on output twist screws was presented in \cite{pulloquinga2023type} to address type II singularity in parallel mechanisms by modifying trajectories. However, the trajectory modification method lacks sufficient flexibility, as one has to repeat the process for different trajectories. Moreover, non-singular reference trajectories are unable to guarantee robots not entering singularity regions, due to the presence of control errors. 

In recent years, advancements in \new{safe optimization enabled by} CBFs offer a promising alternative solution to the singularity avoidance problem. CBFs are powerful tools for handling various constraints, which enable their application in numerous safety-critical fields \cite{wabersich2023data}. For example, one can leverage multiple CBFs to coordinate connected and automated agents at intersections \cite{katriniok2022control}, where the collision avoidance CBFs and velocity CBFs have to be jointly feasible under input constraints. A CBFs design methodology is proposed in \cite{cortez2020correct} for Euler-Lagrange systems with position, velocity and input constraints. CBFs can be also used to ensure the safety of learned models for control in robotic systems \cite{xiao2023barriernet}, and to impose safety-critical constraints in a continuous-time trajectory generation process \cite{sforni2024receding}. In order to handle model mismatch, robust CBFs are developed by introducing a compensation term based on the bound of uncertainty \cite{nguyen2021robust}. \revise{GP-based learning methods are another suitable method to tackle model mismatch \cite{jagtap2020control}, as they provide a quantification of the prediction uncertainty, which could be used to obtain the corresponding bound \cite{fisac2018general, balta2021learning}}.

There is little research regarding CBFs in addressing singularity concerns. \revise{In \cite{kurtz2021control}, CBFs were utilized} to tackle singularity problem in passivity-based control. However, the feasibility of CBFs in \cite{kurtz2021control} is based on the assumption that joints can provide unbounded torques, \revise{which does not precisely correspond to} the capabilities of motors in practice. In addition, model mismatch has not been addressed in \cite{kurtz2021control}.

This paper proposes a methodology for CBFs construction to address singularity avoidance problem in robotic systems with model mismatch and subject to actuator constraints. The primary contributions of this work are summarized as follows: (i) the theoretical guarantee of the feasibility of CBFs with model mismatch and actuator constraints is obtained, as well as the parameter selection criteria is provided, (ii) the model mismatch is learned using GP regression combined with a deterministic error bound, and (iii) the proposed approach is validated by high-fidelity 2 DoFs planar robot simulations on Simscape.

\revise{\textit{Notation:} $\mathbb{R}$ and $\mathbb{R}_{\ge0}$ denote the set of real, non-negative real numbers, respectively. The Euclidean norm is denoted by $\left\| \cdot \right\|$. $\mathbb{N}_n$ denoting the set of natural numbers $\{1,\cdots, n\}$. The matrix inequality $A\le B$ for matrices A and B means that the matrix $B-A$ is positive semidefinite. $e_i$ denotes the $i$th column of $n$th-order identity matrix $I_n$.}

\section{Preliminaries}
In this section, we recall some basic concepts about CBFs.
Consider a nonlinear affine system as follows:
\begin{equation}
    \dot x = f(x)+g(x)u,
\end{equation}
where $x\in\mathbb{R}^n$ and $u\in\mathcal{U}\subset\mathbb{R}^m$ denote the system state and control input, respectively. The functions $f(x)\ :\ \mathbb{R}^n\to\mathbb{R}^n$ and $g(x)\ :\ \mathbb{R}^m\to\mathbb{R}^n$ are assumed to be locally Lipschitz. Moreover, the system is forward complete.

Let $h(x)\ :\ \mathbb{R}^n\to\mathbb{R}$ be a continuously differentiable function related to safety concerns, then, the closed set $\mathcal{C}$ associated to $h(x)$ is defined by:
\begin{align}
    \mathcal{C} :=\{x\in\mathbb{R}^n: h(x)\ge0\}. \label{set C}
\end{align}

If for any initial state $x(t_0)\in\mathcal{C}$, $x(t)\in\mathcal{C}$ for all $t\in\mathbb{R}_{\ge0}$, then $\mathcal{C}$ is forward invariant and the constraint satisfaction is ensured.

We also present the definition of extended class-$\mathcal{K}$ function, \revise{which will be utilized in the definition of CBFs,} as follows:
\begin{definition}\label{class k}
    A continuous function $\alpha: \mathbb{R}\to\mathbb{R}$ is an extended class-$\mathcal{K}$ function if it is strictly increasing and with $\alpha(0) = 0$.
\end{definition}
With the assistance of Definition 1, the definition of CBFs is given as:
\begin{definition}
    Given a set $\mathcal{C}$ defined by \eqref{set C}, $h(x)$ is a CBF if there exists an extend class-$\mathcal{K}$ function, and for  such that:
    \begin{align}
        \sup_{u\in\mathcal{U}}[L_fh(x)+L_gh(x)u+\alpha(h(x))]\ge0,
    \end{align}
    where $L_fh(x)$ and $L_gh(x)$ denote the Lie derivative of $h(x)$ with respect to $x$ and are given by $L_fh(x)=\frac{\partial h(x)}{\partial x}f(x)$ and $L_gh(x)=\frac{\partial h(x)}{\partial x}g(x)$.
\end{definition}

\section{Problem Formulation}
Consider the following uncertain robotic system
\begin{align}
    \dot q=&v\nonumber\\
    \dot v =&M(q)^{-1}(u-C(q,v)v-G(q)-d(x))\label{system},
\end{align}
where $x=[q,\ v]^T\in\mathbb{R}^{2n}$ with $q=[q_1,\cdots,q_n]^T,\ v=[v_1,\cdots,v_n]^T\in\mathbb{R}^n$ denote the system states (i.e., angular positions and velocities). $M(q) \text{ and } C(q, v)\in\mathbb{R}^{n\times n}$, $ u=[u_1,\ \cdots,\ u_m]^T\text{ and }G(q)\in\mathbb{R}^n$ denote inertia matrix, centrifugal force matrix, control input and gravity vector, respectively. $d(x)=[d_1(x),\ \cdots,\ d_n(x)]\in\mathbb{R}^n$ denotes model mismatch. \revise{In order to facilitate the establishment of prediction error bounds in the following sections,} we assume that $d(x)$ satisfies the following assumption.

\begin{assumption}[\cite{hashimoto2022learning}]\label{RKHS}
    For a given kernel function $k_i$, $\mathcal{H}_i$ is the reproducing kernel Hilbert space (RKHS) corresponding to $k_i$ with the induced norm denoted by $\left\| \cdot \right\|_{k_i}$. The unknown function $d_i(x)$ belongs to $\mathcal{H}_i$ for all $i=1,\ \cdots,\ n$, then, \revise{its RKHS norm is bounded by a well-defined known constant $B_i\in\mathbb{R}_{\ge0}$, i.e., $\left\| d_i \right\|_{k_i}\le B_i$.}
\end{assumption}
Assumption 1 defines a potential function space of the unknown function $d_i(x)$. By choosing universal kernels, $\mathcal{H}_i$ contains all continuous functions, ensuring the property of universal approximation. As $B_i$ can be obtained by some data-driven methods \cite{scharnhorst2022robust}, Assumption 1 is not strict in practice.

To facilitate the feasibility analysis of CBFs, we also make an assumption for system state $q$ as follows:
\begin{assumption} \label{q constraint}
    The system state $q$ is bounded by hard constraints, i.e., $q\in\mathcal{Q}\subset\mathbb{R}^n$, where $\mathcal{Q}:=[q_{min},\ q_{max}]$ and $q_{max}=-q_{min}$.
\end{assumption}
Assumption 2 is reasonable in practice, since there must be restrictions of the rotation of joints for real robots, especially for industrial robots. We also assume that system \eqref{system} satisfies the following well-known properties.
\begin{property}\label{property1}
    M(q) is a symmetric, positive definite matrix, which satisfies $m_{min}I_n<M(q)^{-1}<m_{max}I_n$.
\end{property}
\begin{property}\label{property2}
    There exist $c_{max}\ \text{and}\ g_{max}\in \mathbb{R}_{>0}$ such that $C(q,v)\le c_{max}\left\|v\right\|$ and $\left\|G(q)\right\|\le g_{max}$.
\end{property}
Indeed, Assumption 2 also implies that Properties 1 and 2 holds, due to the smoothness of $M,\ C\ \text{and}\ G$.
\begin{figure}[!t]
  \centering
  \subfloat[]{\includegraphics[width=0.5\linewidth]{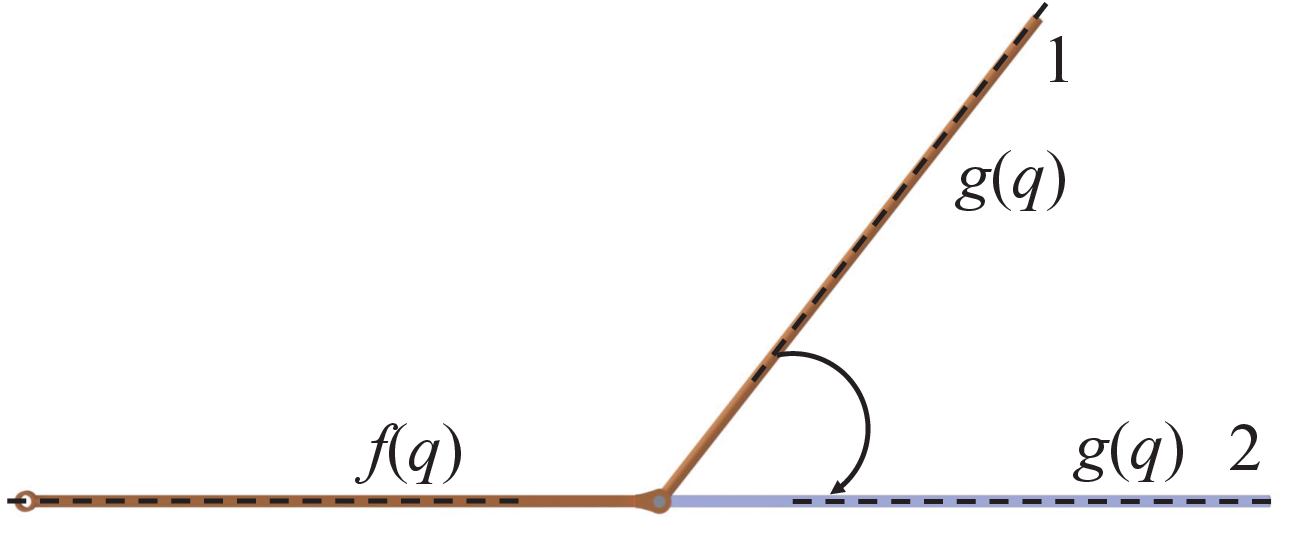}}
  \subfloat[]{\includegraphics[width=0.5\linewidth]{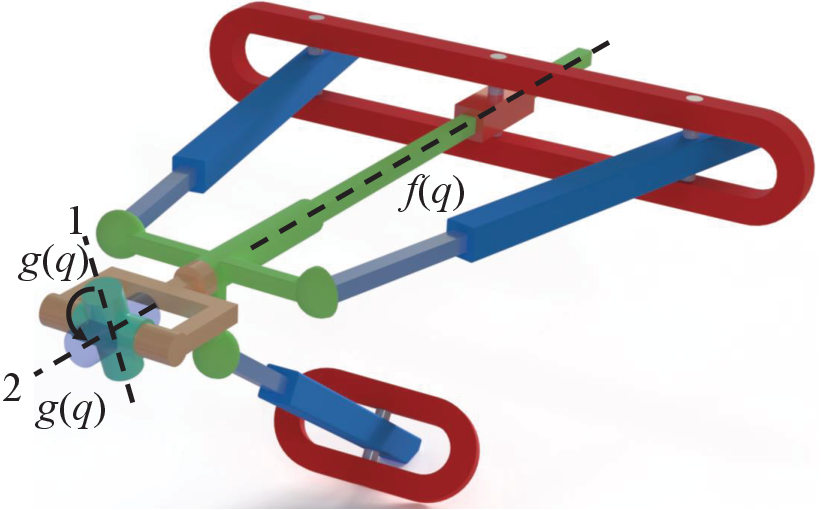}}
  \caption{Singularities occur when these two robots are in configuration 2. (a) 2 DoFs manipulator. (b) 5 DoFs robot.}
  \label{robots}
\end{figure}
The aim of this paper is to prevent robots from entering singularity regions. \revise{As shown in Fig. \ref{robots}}, one primary type of singularities occurs when the direction of two links $f(q)$ and $g(q)$ are parallel \cite{li2023effective, liu2012new}. Accordingly, we define the following configuration constraint.
\begin{equation}
    \arccos(f(q)^Tg(q))=0,
\end{equation}
where $f(q)$ and $g(q) \in \mathbb{R}^3$ denote two unit direction vectors. Adopting the singular cone concept \cite{lin2016improving} \new{that defines the singular domain by the angle between $f(q)$ and $g(q)$}, the following relationships should be satisfied to avoid singularities
\begin{align}
    z(q):=1-\epsilon-f(q)^Tg(q)\ge0\ \label{hard constraints},
\end{align}
where $\epsilon\in\mathcal{A}\subset\mathbb{R}_{>0}$ denotes a safety threshold determined by the half-angle of the singular 
cone. \eqref{hard constraints} indicates that we can established a singularity constraint as follows: 
\begin{align}
    \mathcal{Z} := \{q\in\mathbb{R}^n:z(q)\ge0\}. \label{configuration constraint}
\end{align}

In addition to the aforementioned singularity constraints, we also consider the following velocity constraints, which are important not only for safety concerns but also for the feasibility analysis of singularity CBFs.
\begin{align}
    \mathcal{V}_i:=\{v_i\in\mathbb{R}:\overline{b}_i(v_i)\ge0,\ \underline{b}_i(v_i)\ge0\},\ \forall i\in\mathbb{N}_n  \label{velocity constraint}
\end{align}
where \revise{$\overline{b}_i(v_i)$ and $\underline{b}_i(v_i)$ are defined as:}
\begin{align}
    \overline{b}_i(v_i):=v_{max}-v_i,\ \underline{b}_i(v_i):=v_i-v_{min} \label{func:b},
\end{align}
where for simplicity, we assume $v_{max}=-v_{min}$.

Due to physical constraints, robotic systems can only provide limited actuation torques. Thus, we define the following actuator constraints.
\begin{align}
    \mathcal{U}:=\{u_i\in \mathbb{R}, \forall i\in\mathbb{N}_n:u_{max}-u_i\ge0,\ u_i-u_{min}\ge0\}. \label{input constraint}
\end{align}
where similarly we assume $u_{max}=-u_{min}$ for all $i\in\mathbb{N}_n$.
\section{Control Barrier Function approach for singularity avoidance}
In order to ensure the set $\mathcal{Z}$ is forward invariant, $\dot{z}(q)\geq -\alpha(z(q))$ must hold where $\alpha$ is an extended class-$\mathcal{K}$ function. The derivative of $z(q)$ with respect to time is obtained as $\dot{z}(q) = -\left(\frac{\partial\eta(q) }{\partial q}\right)^Tv$ where $\eta(q) = f(q)^Tg(q)$. The relative degree of the system is 2, which means that no control input can directly ensure the above condition holds. 

\new{To enable the system input to directly act on the safety constraints}, we construct a new constraint as follows:
\begin{align}
    h(x) = \dot{z}(q)+\gamma\beta_1(z(q)),\label{func:h}
\end{align}
where $\gamma\in\mathbb{R}_{>0}$ is a user-design parameter and $\beta_1$ is an extended class-$\mathcal{K}$ function.

we define the closed set associated to \eqref{func:h} as follows:
\begin{align}
    \mathcal{C} :=\{x\in\mathbb{R}^{2n}:h(x)\ge0\},\label{set:c}
\end{align}

\revise{The objective of this paper is to ensure that robots operate safely: specifically, without violating singularity or velocity constraints, by using a safety filter based on CBFs. We can now formulate this objective as the following optimization problem:
\begin{align}
    \min_{u\in\mathcal{U}}\ &\left\|u-u_{nom}\right\|^2\label{optimization}\\
    \text{s.t.}\ &\dot{h}(x)\ge-\delta\beta_2(h(x))\label{dhx}\\
    &\dot{\overline{b}}_i(v_i)\ge-k\beta_3(\overline{b}_i(v_i))\\ &\dot{\underline{b}}_i(v_i)\ge-k\beta_3(\underline{b}_i(v_i))\label{dbx}\\
    &u_i\in\mathcal{U},\forall i\in\mathbb{N}_n,
\end{align}
where $q\in \mathcal{Z}\cap\mathcal{C}$ and $v_i\in \mathcal{C}\cap\mathcal{V}_i,\ \forall i\in\mathbb{N}_n$. $\beta_2$ and $\beta_3$ are extended class-$\mathcal{K}$ functions. $\delta\in\mathbb{R}_{>0}$ and $k\in\mathbb{R}_{>0}$ is two user-design parameters. $u_{nom}\in\mathcal{U}$ denotes a  nominal controller, such as PID and cascade controllers \cite{khosravi2022safety}.}

\revise{Now, we are aiming to find conditions that ensure \eqref{dhx} - \eqref{dbx} hold when the system \eqref{system} is subject to model mismatch and actuator constraints.}

\subsection{Singularity constraints}
\revise{In this section, we deduce a sufficient condition that can ensure \eqref{dhx} holds, and provide a criterion for parameter selection.}
By substituting the formulation of $\dot h(x)$ into \eqref{dhx}, and combining it with \eqref{system}, we have

%


\begin{align}
    -\Gamma^TM(q)^{-1}(u-C(q,v)v-G(q)-d(x))\nonumber\\-v^T\left(\frac{\partial^2\eta }{\partial q^2}\right)^Tv-\gamma\frac{\partial \beta_1}{\partial z}\Gamma^Tv\ge-\delta\beta_2(h(x)).\label{deri2}
\end{align}
where $\Gamma = \frac{\partial \eta}{\partial q}$.
Obviously, due to the mismatch term $d(x)$, it is difficult to ensure \eqref{deri2} holds. In this paper, we leverage GP regression to learn $d(x)$. More precisely, let $\mathcal{GP}(0,k_i)$ be the GP prior for $d_i(x)$ with zero mean, where $k_i:\ \mathbb{X}\times\mathbb{X}\to\ \mathbb{R}$ denotes kernel functions. Given a dataset $\mathcal{D}:=\{(x_i,Y_i)\}_{i=1}^M$ with $M$ data points in which $Y^i:=u^i-M(q^i)\ddot q^i-C(q^i,v^i)v^i-G(q^i)\in\mathbb{R}^n$, the prediction of $d(x)$ is characterized by the mean $\mu(x) = [\mu_1(x),\ \cdots,\ \mu_n(x)]^T$ and variance $\sigma^2(x)=[\sigma_1^2(x),\ \cdots,\ \sigma_n^2(x)]^T$ as:
\begin{align}
    \mu_i(x)&:=k_{\mathcal{D},i}^T(K_{\mathcal{D},i}+\sigma^2_vI_M)^{-1}y_{\mathcal{D},i},\\
    \sigma_i^2(x)&:=k_i(x,x)-k_{\mathcal{D},i}^T(K_{\mathcal{D},i}+\sigma^2_vI_M)^{-1}k_{\mathcal{D},i},
\end{align}
where $y_{\mathcal{D},i}:=[Y_1^i,\ \cdots,\ Y_M^i]^T$ for $i\in\mathbb{N}_n$ with $Y_j^i$ denoting the $i$th element of $Y_j$. $k_{\mathcal{D},i}=[k_i(x,x_1),\ \cdots,\ k_j(x,x_M)]^T$, and $K_{\mathcal{D},i}$ is defined as $\left[k_i(x^i,\ x^j)\right]_{i,j=1}^M$ for all $i\in\mathbb{N}_n$. $\sigma^2_v$ denotes the variance of noise. $I_M$ denotes $M$th order identity matrix.
Based on Assumption 1, the following lemma for prediction uncertainty of GP regression holds:
\begin{lemma}[\cite{hashimoto2022learning}]
    Suppose that Assumption 1 holds, and a training data $\mathcal{D}:=\{(x_i,y_i)\}_{i=1}^M$ is given. Then, for all $x\in\mathcal{X}$, the prediction error of GP regression is bounded by
    \begin{align}
        \left\|\mu(x)-d(x) \right\|\le \lambda(x):=\sqrt{\sum_{i=1}^{n}(B_i^2-\omega_i+M)\sigma_i^2}, \label{prediction error}
    \end{align}
    where $\omega_i=y_{\mathcal{D},i}^T\left( K_{\mathcal{D},i}+\sigma^2_vI_M \right)^{-1}y_{\mathcal{D},i}$.
\end{lemma}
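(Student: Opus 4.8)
The plan is to prove the bound one scalar component at a time and then aggregate: since $\|\mu(x)-d(x)\|^2=\sum_{i=1}^n(\mu_i(x)-d_i(x))^2$, it suffices to show $|\mu_i(x)-d_i(x)|\le\sigma_i(x)\sqrt{B_i^2-\omega_i+M}$ for each $i\in\mathbb{N}_n$ and then use monotonicity of the square root. Fix $i$. The starting observation is that $\mu_i$ is exactly the kernel ridge regression estimate in $\mathcal{H}_i$, i.e. the minimiser of $\sum_{j=1}^M(h(x_j)-Y_j^i)^2+\sigma_v^2\|h\|_{k_i}^2$, so by the representer theorem $\mu_i=\sum_{j=1}^M\alpha_jk_i(\cdot,x_j)$ with $\alpha=(K_{\mathcal{D},i}+\sigma_v^2I_M)^{-1}y_{\mathcal{D},i}$. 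I would then run the whole argument inside the augmented Hilbert space $\mathcal{K}:=\mathcal{H}_i\times\mathbb{R}^M$ with inner product $\langle(h,b),(h',b')\rangle_{\mathcal{K}}:=\langle h,h'\rangle_{k_i}+\sigma_v^{-2}b^Tb'$, using the sampling operator $Eh:=(h(x_1),\dots,h(x_M))^T$, the noise vector $\epsilon:=y_{\mathcal{D},i}-Ed_i$, and the closed ``data graph'' subspace $\mathcal{A}:=\{(h,Eh):h\in\mathcal{H}_i\}$.

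Two ingredients carry the proof. First, because the ridge cost equals $\|(h,Eh)-(0,y_{\mathcal{D},i})\|_{\mathcal{K}}^2$, one has $(\mu_i,E\mu_i)=P_{\mathcal{A}}(0,y_{\mathcal{D},i})$; and since $(d_i,Ed_i)\in\mathcal{A}$ with $(0,y_{\mathcal{D},i})=(d_i,Ed_i)+(-d_i,\epsilon)$, also $(\mu_i-d_i,E(\mu_i-d_i))=P_{\mathcal{A}}(-d_i,\epsilon)$ and $P_{\mathcal{A}^{\perp}}(-d_i,\epsilon)=P_{\mathcal{A}^{\perp}}(0,y_{\mathcal{D},i})=(-\mu_i,\,y_{\mathcal{D},i}-E\mu_i)$. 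A short computation with $\alpha$ gives $\|P_{\mathcal{A}^{\perp}}(0,y_{\mathcal{D},i})\|_{\mathcal{K}}^2=\|\mu_i\|_{k_i}^2+\sigma_v^{-2}\|y_{\mathcal{D},i}-E\mu_i\|^2=y_{\mathcal{D},i}^T(K_{\mathcal{D},i}+\sigma_v^2I_M)^{-1}y_{\mathcal{D},i}=\omega_i$, so the Pythagorean identity in $\mathcal{K}$ yields
\begin{equation}
\|(\mu_i-d_i,\,E(\mu_i-d_i))\|_{\mathcal{K}}^2=\|d_i\|_{k_i}^2+\sigma_v^{-2}\|\epsilon\|^2-\omega_i\le B_i^2+M-\omega_i ,
\end{equation}
where the inequality uses Assumption 1 and the (implicit) bounded-noise hypothesis $\|\epsilon\|^2\le M\sigma_v^2$. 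Second, put $c:=\sigma_v^2(K_{\mathcal{D},i}+\sigma_v^2I_M)^{-1}k_{\mathcal{D},i}$ and $w:=(k_i(\cdot,x)-\sigma_v^{-2}\sum_{j=1}^Mc_jk_i(\cdot,x_j),\,c)$; a direct check shows $w\in\mathcal{A}$, that $\langle(h,Eh),w\rangle_{\mathcal{K}}=h(x)$ for all $h\in\mathcal{H}_i$ (reproducing property plus the choice of $c$), and — expanding norms — that $\|w\|_{\mathcal{K}}^2=k_i(x,x)-k_{\mathcal{D},i}^T(K_{\mathcal{D},i}+\sigma_v^2I_M)^{-1}k_{\mathcal{D},i}=\sigma_i^2(x)$.

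Combining: apply $\langle(h,Eh),w\rangle_{\mathcal{K}}=h(x)$ with $h=\mu_i-d_i$ and then Cauchy--Schwarz in $\mathcal{K}$, obtaining $|\mu_i(x)-d_i(x)|\le\|(\mu_i-d_i,E(\mu_i-d_i))\|_{\mathcal{K}}\,\|w\|_{\mathcal{K}}\le\sqrt{B_i^2-\omega_i+M}\,\sigma_i(x)$; squaring, summing over $i\in\mathbb{N}_n$ and taking the square root gives $\lambda(x)$. I expect the main obstacle to be structural rather than computational: the weight $\sigma_v^{-2}$ on the $\mathbb{R}^M$ factor of $\mathcal{K}$ must be chosen so that \emph{simultaneously} (i) the ridge cost becomes a genuine squared $\mathcal{K}$-distance, (ii) the orthogonal residual of $(0,y_{\mathcal{D},i})$ collapses to the scalar $\omega_i$, and (iii) point evaluation at $x$ is represented by an element of $\mathcal{A}$ whose $\mathcal{K}$-norm is exactly the GP posterior standard deviation $\sigma_i(x)$; once these three alignments are in place, the whole bound falls out of a single Cauchy--Schwarz step. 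It is also worth making explicit the bounded-noise assumption behind the ``$+M$'' term (e.g. $|\epsilon_j|\le\sigma_v$), which the quoted statement leaves implicit, and noting that the same identity forces $B_i^2+M-\omega_i\ge0$ so that $\lambda(x)$ is well defined.
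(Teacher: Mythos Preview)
The paper does not prove this lemma at all: it is quoted verbatim from \cite{hashimoto2022learning} and used as a black box, so there is no ``paper's own proof'' to compare against. Your argument is nonetheless correct and is essentially the standard derivation behind this class of deterministic RKHS bounds (the same augmented-space / Cauchy--Schwarz machinery that underlies the cited reference): embed the ridge problem in $\mathcal{H}_i\times\mathbb{R}^M$ with the $\sigma_v^{-2}$-weighted inner product, read off $\omega_i$ as the squared norm of the orthogonal residual via Pythagoras, realise point evaluation on the graph subspace by an element of $\mathcal{K}$-norm $\sigma_i(x)$, and finish with Cauchy--Schwarz. Your three verifications (ridge cost $=$ squared $\mathcal{K}$-distance, residual norm $=\omega_i$, $\|w\|_{\mathcal{K}}=\sigma_i(x)$) all check out.

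One point you rightly flag, and which is worth keeping explicit: the ``$+M$'' in the radicand comes from a bounded-measurement-noise hypothesis of the form $|\epsilon_j|\le\sigma_v$ (so $\sigma_v^{-2}\|\epsilon\|^2\le M$). The paper's statement of the lemma suppresses this assumption entirely, but without it the Pythagorean step only gives $B_i^2+\sigma_v^{-2}\|\epsilon\|^2-\omega_i$ and the displayed bound does not follow. This is a gap in the \emph{statement} as reproduced in the paper, not in your proof; your proposal handles it correctly by naming the missing hypothesis.
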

Since $K_{D,i}+\sigma^2_vI_M$ is positive definite, we can also derive the following state-independent bound
\begin{align}
    \lambda(x)\le\bar\lambda:=\sqrt{\sum_{i=1}^{n}(B_i^2-\omega_i+M)\max_{\substack{
 v_i\in \mathcal{C}\cap\mathcal{V}_i\forall i\in\mathbb{N}_n\\
q\in \mathcal{Z}\cap\mathcal{C}
}}k_i(x,x)}. \label{statistic bound}
\end{align}

As a data-driven method, the decreasing of prediction error bound $\lambda$ in GP regression with the increasing of dataset size $M$ has been proven in \cite{hashimoto2022learning}. However, the increasing of $M$ also implies the increasing of computational time. Therefore, it is imperative to elaborately choose dataset to ensure a trade-off between computational time and prediction performance. Readers who are interested in data selection may refer to \cite{choi2023constraint}. 

We are now ready to present the Theorem for ensuring $\mathcal{C}$ \revise{is} forward invariant.
\begin{theorem} \label{the1}
    Given system \eqref{system} satisfying Assumptions 1 and 2, and a continuous differentiable function \eqref{func:h}, if the following condition holds, 
    \begin{align}
        &-\Gamma^TM(q)^{-1}(u-C(q,v)v-G(q)-\mu(x))-\gamma\frac{\partial \beta_1}{\partial z}\Gamma^Tv\nonumber\\&-v^T\left(\frac{\partial^2\eta }{\partial q^2}\right)^Tv\ge-\delta\beta_2(h(x))+\left\| \Gamma^TM(q)^{-1}\right\|\bar\lambda,\label{the1:condition}
    \end{align}
    then, the input $u$ renders $\mathcal{C}$ forward invariant.
\end{theorem}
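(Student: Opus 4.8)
The plan is to show that condition \eqref{the1:condition} implies the CBF inequality \eqref{dhx}, i.e. $\dot h(x)\ge-\delta\beta_2(h(x))$, for the true system \eqref{system}, and then invoke the standard CBF forward-invariance argument (Nagumo/comparison lemma) to conclude that $\mathcal{C}$ is rendered forward invariant. First I would write out $\dot h(x)$ explicitly using \eqref{func:h} and the chain rule, substituting $\dot v$ from \eqref{system}; this reproduces \eqref{deri2}, with the troublesome term being $\Gamma^T M(q)^{-1} d(x)$, which depends on the unknown mismatch. The key step is to replace $d(x)$ by its GP mean prediction $\mu(x)$ and bound the residual: by Cauchy--Schwarz,
\begin{align}
\left|\Gamma^T M(q)^{-1}\bigl(\mu(x)-d(x)\bigr)\right|\le\left\|\Gamma^T M(q)^{-1}\right\|\,\left\|\mu(x)-d(x)\right\|,\nonumber
\end{align}
and then apply Lemma~2 together with the state-independent bound \eqref{statistic bound} to get $\left\|\mu(x)-d(x)\right\|\le\lambda(x)\le\bar\lambda$. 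This converts the exact-but-unknown inequality into a sufficient, computable one.

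Next I would assemble the pieces: starting from \eqref{the1:condition}, add and subtract the $\Gamma^T M(q)^{-1}d(x)$ term, and use the bound above to show that the left-hand side of \eqref{deri2} (the true $\dot h(x)+\gamma\frac{\partial\beta_1}{\partial z}\Gamma^T v+v^T(\partial^2\eta/\partial q^2)^T v$ expression, evaluated with the actual $d(x)$) is at least $-\delta\beta_2(h(x))$. Concretely, the extra $+\left\|\Gamma^T M(q)^{-1}\right\|\bar\lambda$ on the right-hand side of \eqref{the1:condition} is exactly what absorbs the worst-case prediction error when we pass from $\mu(x)$ back to $d(x)$, so the inequality \eqref{dhx} holds for the real system. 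Here I would note that all the norms and quantities appearing are well-defined and finite on the relevant compact domain $q\in\mathcal{Z}\cap\mathcal{C}$, $v_i\in\mathcal{C}\cap\mathcal{V}_i$, using Assumption~2 and Properties~1--2 (smoothness/boundedness of $M^{-1}$, $C$, $G$ and of $\eta$ and its derivatives), which also justifies the maximum in \eqref{statistic bound}.

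Finally, with \eqref{dhx} established, I would apply the comparison lemma: since $\beta_2$ is an extended class-$\mathcal{K}$ function and $\dot h\ge-\delta\beta_2(h)$ along trajectories, any solution starting in $\mathcal{C}$ (i.e. $h(x(t_0))\ge0$) satisfies $h(x(t))\ge0$ for all $t\ge0$, because the scalar comparison system $\dot y=-\delta\beta_2(y)$ with $y(t_0)\ge0$ stays nonnegative. Hence $\mathcal{C}$ is forward invariant under any $u$ satisfying \eqref{the1:condition}. I expect the main obstacle to be bookkeeping rather than conceptual: carefully matching the sign conventions and the exact grouping of terms between \eqref{deri2} and \eqref{the1:condition} so that the error term enters with the correct sign, and making sure the bound $\bar\lambda$ (which is stated over the constrained set) is legitimately an upper bound on $\lambda(x)$ along the closed-loop trajectory — this is where Assumption~2 and the definition of the operating set are essential, since without state constraints the supremum of $k_i(x,x)$ need not be finite.
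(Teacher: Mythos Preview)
Your proposal is correct and follows essentially the same route as the paper's own proof: differentiate $h(x)$, split $d(x)=\mu(x)+(d(x)-\mu(x))$, bound the residual via $\left\|\Gamma^T M(q)^{-1}\right\|\bar\lambda$ using Lemma~2 and \eqref{statistic bound}, and conclude $\dot h(x)\ge -\delta\beta_2(h(x))$. If anything you are slightly more thorough than the paper, which omits the explicit Cauchy--Schwarz justification and the final comparison/Nagumo step, simply asserting that $\dot h(x)\ge -\delta\beta_2(h(x))$ ensures forward invariance of $\mathcal{C}$.
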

\begin{proof}
    The proof of Theorem 1 is straightforward. \revise{We leverage GP regression to learn the unknown function $d_i(x),\forall i\in\mathbb{N}_n$, and utilize the deterministic upper bound $\bar\lambda$ of prediction error to compensate for the difference between the prediction mean and real value.} Taking the first derivative of \eqref{func:h} with respect to time and combining \eqref{prediction error} and \eqref{statistic bound}, we have
    \begin{align}
        \dot h(x)=&-\Gamma^TM(q)^{-1}(u-C(q,v)v-G(q)-\mu(x))\nonumber\\&-\Gamma^TM(q)^{-1}(\mu(x)-d(x))\nonumber\\&-\gamma\frac{\partial \beta_1}{\partial z}\Gamma^Tv-v^T\left(\frac{\partial^2\eta }{\partial q^2}\right)^Tv
        \nonumber\\\ge&-\Gamma^T(M(q)^{-1}(u-C(q,v)v-G(q)-\mu(x))\nonumber\\&-\gamma\frac{\partial \beta_1}{\partial z}\Gamma^Tv-\left\|\Gamma^T(M(q)^{-1}\right\|\bar\lambda\nonumber\\&-v^T\left(\frac{\partial^2\eta }{\partial q^2}\right)^Tv.
    \end{align}
    
    Then, according to \eqref{the1:condition}, $\dot h(x)\ge-\delta\beta_2(h(x))$ holds, which ensures $\mathcal{C}$ is forward invariant.
\end{proof}

\revise{The feasibility of \eqref{the1:condition} under actuator constraints is now dependent on the condition parameters $\gamma$ and $\delta$. Before determining precisely when it is enforced, we have to introduce one more assumption.}
\begin{assumption} \label{ass3}
    The first and second derivative of $f(q)$ and $g(q)$ with respect to $q$ are bounded, i.e., $\left\|\frac{\partial f_i}{\partial q}\right\| \le f_q,\ \left\|\frac{\partial g_i}{\partial q}\right\| \le g_q,\ \left\|\frac{\partial^2 f_i}{\partial q^2}\right\| \le f_{q^2}  \text{ and } \left\|\frac{\partial^2 g_i}{\partial q^2}\right\| \le g_{q^2}  $ for all $i=1,2,3$.
\end{assumption}

Note that Assumption 3 is a natural extension of Assumption 2, which is guaranteed by the continuity and smoothness of $f(q)$ and $g(q)$. Thus, Assumption 3 imposes no practical restrictions.
Suppose that Assumption 3 holds, we have the following Lemma.
\begin{lemma}
     The first and second derivative of $\eta(q)$ with respect to $q$ are bounded, and defined by: $\left\| \frac{\partial \eta(q)}{\partial q} \right\|\le \eta_{qmax}:=3(f_q+g_q)$ and $\left\| \frac{\partial^2 \eta(q)}{\partial q^2} \right\|\le \eta_{max^2}:=3\left( f_{q^2} +2f_qg_q+g_{q^2}\right)$
\end{lemma}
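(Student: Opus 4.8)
The plan is to bound $\partial\eta/\partial q$ and $\partial^2\eta/\partial q^2$ by expanding $\eta(q)=f(q)^Tg(q)=\sum_{i=1}^{3}f_i(q)g_i(q)$ and differentiating componentwise, then applying the triangle inequality together with the Cauchy--Schwarz/submultiplicativity estimates and the fact that $f(q),g(q)$ are unit vectors. First I would write $\frac{\partial\eta}{\partial q}=\sum_{i=1}^{3}\left(g_i\frac{\partial f_i}{\partial q}+f_i\frac{\partial g_i}{\partial q}\right)$, and since $|f_i|\le\|f\|=1$ and $|g_i|\le\|g\|=1$ for each $i$, take norms to get $\left\|\frac{\partial\eta}{\partial q}\right\|\le\sum_{i=1}^{3}\left(\left\|\frac{\partial f_i}{\partial q}\right\|+\left\|\frac{\partial g_i}{\partial q}\right\|\right)\le 3(f_q+g_q)=\eta_{qmax}$, using Assumption~\ref{ass3}.

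Next I would differentiate once more. Using the product rule on each term of $\frac{\partial\eta}{\partial q}$, the Hessian is $\frac{\partial^2\eta}{\partial q^2}=\sum_{i=1}^{3}\left(g_i\frac{\partial^2 f_i}{\partial q^2}+f_i\frac{\partial^2 g_i}{\partial q^2}+\frac{\partial f_i}{\partial q}\left(\frac{\partial g_i}{\partial q}\right)^T+\frac{\partial g_i}{\partial q}\left(\frac{\partial f_i}{\partial q}\right)^T\right)$. Taking the matrix norm and applying the triangle inequality, the first two groups of terms are bounded by $\sum_i(|g_i|f_{q^2}+|f_i|g_{q^2})\le 3(f_{q^2}+g_{q^2})$; for the rank-one cross terms, $\left\|\frac{\partial f_i}{\partial q}\left(\frac{\partial g_i}{\partial q}\right)^T\right\|\le\left\|\frac{\partial f_i}{\partial q}\right\|\left\|\frac{\partial g_i}{\partial q}\right\|\le f_q g_q$, and likewise for the transpose term, contributing $\sum_i 2 f_q g_q=6 f_q g_q$. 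Summing yields $\left\|\frac{\partial^2\eta}{\partial q^2}\right\|\le 3(f_{q^2}+2f_q g_q+g_{q^2})=\eta_{max^2}$.

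The main obstacle — really the only subtle point — is the handling of the mixed partial (cross) terms in the Hessian: one must notice that these rank-one outer products appear and that their operator norm is controlled by the product of the gradient norms, rather than naively hoping only the pure second-derivative terms survive. A secondary point worth stating explicitly is the justification of $|f_i|\le 1$ and $|g_i|\le 1$, which follows immediately from $f(q),g(q)$ being unit direction vectors as declared after \eqref{hard constraints}; without this the bound on the first derivative would carry an extra factor of $\|f\|$ and $\|g\|$. Everything else is a routine triangle-inequality bookkeeping exercise, and I would present it compactly rather than expanding all nine scalar components.
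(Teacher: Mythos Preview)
Your proposal is correct and follows essentially the same route as the paper's proof in Appendix~A: expand $\eta(q)=\sum_{i=1}^{3}f_i(q)g_i(q)$, differentiate termwise, and bound using the triangle inequality together with $|f_i|,|g_i|\le 1$ and Assumption~\ref{ass3}. Your Hessian expression with the two separate cross terms $\frac{\partial f_i}{\partial q}\bigl(\frac{\partial g_i}{\partial q}\bigr)^T+\frac{\partial g_i}{\partial q}\bigl(\frac{\partial f_i}{\partial q}\bigr)^T$ is in fact slightly more careful than the paper's $2\frac{\partial f_i}{\partial q}\bigl(\frac{\partial g_i}{\partial q}\bigr)^T$, but both yield the identical norm bound $2f_q g_q$ per index and hence the same final result.
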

\begin{proof}
\revise{The proof can be found in Appendix A. }
\end{proof}



Now, we offer a sufficient condition \revise{that parameters $\gamma$ and $\delta$ should satisfy to ensure \eqref{the1:condition} holds}, as follows:
\begin{lemma}\label{lemma3}
    Given a system \eqref{system} with singularity constraints, velocity constraints and input constraints defined by \eqref{configuration constraint}, \eqref{velocity constraint} and \eqref{input constraint}. Suppose that there exists a function $h(x)$ defined by \eqref{func:h} with extended class-$\mathcal{K}$ functions $\beta_1$ and $\beta_2$. if the following condition holds for $q\in\mathcal{Z}\cap\mathcal{C}$ and $c\in\mathcal{C}\cap\mathcal{V}_i, \forall i\in\mathbb{N}_n$
    \begin{align}
        &3\eta_{max^2}v_{max}^2+\sqrt{3}\gamma\frac{\partial \beta_1}{\partial z}\eta_{max}v_{max}-\delta\beta_2(h(x))\nonumber\\&+\eta_{qmax}m_{max}(\sqrt{3}u_{max}+c_{max}v_{max}^2+g_{max}+\bar\lambda)\nonumber\\&+\eta_{qmax}m_{max}\left\|\mu(x)\right\|\le0, \label{sufficient condition}
    \end{align}
    then, there exists $u_i\in\mathcal{U}$ for all $i\in\mathbb{N}_n$ enforce \eqref{the1:condition}.
\end{lemma}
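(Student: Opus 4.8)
The plan is to exploit that the only control-dependent term in \eqref{the1:condition} is the linear term $-\Gamma^TM(q)^{-1}u$, and to show that for \emph{every} admissible input this term already exceeds the remaining (state-dependent but bounded) quantities, so that feasibility collapses to the single scalar inequality \eqref{sufficient condition}. First I would move every $u$-free term of \eqref{the1:condition} to the right-hand side, rewriting the constraint equivalently as $-\Gamma^TM(q)^{-1}u\ge R(x)$ where
\begin{align*}
  R(x):=&-\delta\beta_2(h(x))+\|\Gamma^TM(q)^{-1}\|\bar\lambda-\Gamma^TM(q)^{-1}\bigl(C(q,v)v+G(q)+\mu(x)\bigr)\\
  &+\gamma\frac{\partial\beta_1}{\partial z}\Gamma^Tv+v^T\Bigl(\frac{\partial^2\eta}{\partial q^2}\Bigr)^Tv .
\end{align*}
It then suffices to exhibit a single $u\in\mathcal{U}$ with $-\Gamma^TM(q)^{-1}u\ge R(x)$.

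The key (deliberately conservative) step is a lower bound on $-\Gamma^TM(q)^{-1}u$ that is \emph{uniform} over $\mathcal{U}$: by Cauchy--Schwarz together with Property~\ref{property1} ($\|M(q)^{-1}\|\le m_{max}$), Lemma~2 ($\|\Gamma\|\le\eta_{qmax}$) and the box constraint \eqref{input constraint} ($\|u\|\le\sqrt{3}\,u_{max}$), one has $-\Gamma^TM(q)^{-1}u\ge-\eta_{qmax}m_{max}\sqrt{3}\,u_{max}$ for \emph{all} $u\in\mathcal{U}$. Hence it is enough to show $R(x)+\eta_{qmax}m_{max}\sqrt{3}\,u_{max}\le0$ on $q\in\mathcal{Z}\cap\mathcal{C}$, $v_i\in\mathcal{C}\cap\mathcal{V}_i$; in that case \emph{any} admissible input — in particular $u_{nom}$ — enforces \eqref{the1:condition}, which proves the claim.

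What remains is to upper bound $R(x)$ term by term. I would use $\|\Gamma^TM(q)^{-1}\|\le\eta_{qmax}m_{max}$ (Property~\ref{property1}, Lemma~2); $\|C(q,v)v\|\le c_{max}\|v\|^2$ and $\|G(q)\|\le g_{max}$ (Property~\ref{property2}); the state-independent GP bound $\lambda(x)\le\bar\lambda$ of \eqref{statistic bound}; $|\Gamma^Tv|\le\|\Gamma\|\,\|v\|$ and $|v^T(\partial^2\eta/\partial q^2)^Tv|\le\|\partial^2\eta/\partial q^2\|\,\|v\|^2\le\eta_{max^2}\|v\|^2$ (Lemma~2); and finally the velocity set \eqref{velocity constraint} to replace $\|v\|$ and $\|v\|^2$ by their worst-case values. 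Keeping the two genuinely state-dependent pieces $-\delta\beta_2(h(x))$ and $\eta_{qmax}m_{max}\|\mu(x)\|$ intact, these estimates turn $R(x)+\eta_{qmax}m_{max}\sqrt{3}\,u_{max}$ into exactly the left-hand side of \eqref{sufficient condition}, closing the argument.

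The computation itself is routine; the points requiring care are the \emph{directions} of the inequalities — a lower bound is needed on $-\Gamma^TM(q)^{-1}u$ that holds uniformly in $u$, but upper bounds are needed on all remaining terms — and the fact that $\gamma\frac{\partial\beta_1}{\partial z}\Gamma^Tv$ and $v^T(\partial^2\eta/\partial q^2)^Tv$ are sign-indefinite, so they must be bounded in absolute value (via $|\Gamma^Tv|\le\|\Gamma\|\,\|v\|$ and $|v^TAv|\le\|A\|\,\|v\|^2$) before Lemma~2 is invoked. I would also flag that \eqref{sufficient condition} still carries $h(x)$, $\beta_1$, $\beta_2$ and $\mu(x)$, so it is a pointwise condition that must hold throughout $q\in\mathcal{Z}\cap\mathcal{C}$ and $v_i\in\mathcal{C}\cap\mathcal{V}_i$ — no further relaxation of these terms is performed, which is exactly what makes the inequality usable as a design rule for $\gamma$ and $\delta$.
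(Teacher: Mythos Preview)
Your proposal is correct and follows essentially the same route as the paper: rearrange \eqref{the1:condition}, bound the input-dependent term $\Gamma^TM(q)^{-1}u$ uniformly over $\mathcal{U}$ via Cauchy--Schwarz and Property~\ref{property1}/Lemma~2, then upper-bound every remaining term using Properties~\ref{property1}--\ref{property2}, Lemma~2, the velocity box, and the state-independent GP bound $\bar\lambda$, so that the whole inequality collapses to \eqref{sufficient condition} holding pointwise. Your explicit observation that the resulting condition is satisfied by \emph{every} $u\in\mathcal{U}$ (hence in particular by $u_{nom}$) is exactly the mechanism the paper uses, even if it only states the existential conclusion.
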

\begin{proof}
We first re-arrange \eqref{the1:condition} as follows:
\begin{align}
    &\Gamma^TM(q)^{-1}(u-C(q,v)v-G(q)-\mu(x))+\gamma\frac{\partial \beta_1}{\partial z}\Gamma^Tv\nonumber\\&+v^T\left(\frac{\partial^2\eta }{\partial q^2}\right)^Tv-\left\| \Gamma^TM(q)^{-1}\right\|\bar\lambda\le\delta\beta_2(h(x)). \label{new condi}
\end{align}

The proof of Lemma 3 starts with analyzing the upper bound of each term in the left part of \eqref{new condi}, since as long as the upper bound of the left part is smaller than $\delta\beta_2(h(x))$, \eqref{new condi} is guaranteed. First, the influence of control input satisfies
\begin{align}
    \Gamma^T M^{-1}u\le\left\|\Gamma^T M^{-1}u \right\|\le\sqrt{3}\eta_{qmax}m_{max}u_{max},
\end{align}
where $\left\|u\right\|\le\sqrt{3}u_{max}$ can be directly deduced by $u_i\in\mathcal{U}$.

Second, we address the term $-\Gamma M^{-1}Cv$. In light of Properties 1 and 2, it follows $\left\|M(q)^{-1}C(q,v) \right\|\le m_{max}c_{max}\left\|v\right\|$. Moreover, based on $v_i\in\mathcal{V}_i$, we have $\left\|v\right\|\le\sqrt{3}v_{max}$, and such that $\left\|M(q)^{-1}C(q,v)\right\|\le\sqrt{3}m_{max}c_{max}v_{max}$. Consequently, the following condition holds
\begin{align}
    -\Gamma M^{-1}Cv\le3\eta_{qmax}m_{max}c_{max}v_{max}^2. \label{C}
\end{align}

Third, we address the gravity term. Obviously, it follows
\begin{align}
    -\Gamma M^{-1}G\le\left\|\Gamma M^{-1}G\right\|\le\eta_{qmax}m_{max}g_{max}. \label{G}
\end{align}

Next, for the term $v^T\left(\frac{\partial^2\eta }{\partial q^2}\right)^Tv$, we have
\begin{align}
    v^T\left(\frac{\partial^2\eta }{\partial q^2}\right)^Tv\le3\eta_{max^2}v_{max}^2. \label{quadra}
\end{align}

Due to $\gamma\frac{\partial \beta_1}{\partial z}$ is non-negative, we also have
\begin{align}
    \gamma\frac{\partial \beta_1}{\partial z}\Gamma^Tv\le\sqrt{3}\gamma\frac{\partial \beta_1}{\partial z}\eta_{max}v_{max}. \label{beta1}
\end{align}

Finally, we address the terms related to GP regression, they follow
\begin{align}
    -\Gamma M^{-1}\mu(x)\le\eta_{qmax}m_{max}\left\|\mu(x)\right\|, \label{mean}
\end{align}
\begin{align}
    \left\|\Gamma^T(M(q)^{-1}\right\|\lambda(x)\le\eta_{qmax}m_{max}\bar\lambda. \label{variance}
\end{align}

Consequently, combining the above all terms, we have
\begin{align}
    \Phi:=&\Gamma^TM(q)^{-1}(u-C(q,v)v-G(q)-\mu(x))\nonumber\\&+\gamma\frac{\partial \beta_1}{\partial z}\Gamma^Tv+v^T\left(\frac{\partial^2\eta }{\partial q^2}\right)^Tv+\left\| \Gamma^TM(q)^{-1}\right\|\bar\lambda\nonumber\\&\le\eta_{qmax}m_{max}(\sqrt{3}u_{max}+3c_{max}v_{max}^2+g_{max})\nonumber\\&\quad\ \eta_{qmax}m_{max}(\left\|\mu(x)\right\|+\bar\lambda)+3\eta_{max^2}v_{max}^2\nonumber\\&\quad\ +\sqrt{3}\gamma\frac{\partial \beta_1}{\partial z}\eta_{max}v_{max} =:\Psi
\end{align}

Then, as long as \eqref{sufficient condition} holds, \revise{we can enforce \eqref{the1:condition} (and thus $\dot h(x)\ge-\delta\beta_2(h(x))$).}
\end{proof}
In addition, \eqref{sufficient condition} not only provides a sufficient condition to enforce forward invariance of \eqref{the1:condition}, but also offers a basis for the selection of parameters $\gamma$ and $\delta$. As $\beta_2(h(x))>0$ when $h(x)>0$, \eqref{sufficient condition} can be guaranteed if $\delta$ is chosen to be sufficiently large. Accordingly, we define the minimum value $\delta$ can reach as $\delta^{\ast}$
\begin{align}
    \delta^{\ast}:=\max_{q\in\mathcal{Z}\cap\mathcal{C},v_i\in\mathcal{C}\cap\mathcal{V}_i,i\in\mathbb{N}_n}\frac{\Psi}{\beta_2(h(x))}.
\end{align}

As long as we select $\delta\ge\delta^{\ast}$, the sufficient condition \eqref{sufficient condition} for Theorem 1  can be ensured. the above conclusion holds under the condition that $h(x)>0$. If $h(x)=0$, i.e., the boundary of the constraint $\mathcal{C}$ \revise{is reached}, $\delta$ is no longer effective in \eqref{sufficient condition}. In order to ensure the CBF condition \eqref{the1:condition} for singularity constraints still holds, parameter $\gamma$ should be tuned elaborately. To this end, we recall \eqref{the1:condition}.
\revise{Obviously, when $h(x)=0$, \eqref{the1:condition} is transformed to $\Phi\le0$. We keep $\Gamma^TM(q)^{-1}u$ unchanged in $\Phi$ and still use the upper bounds of the remaining terms \eqref{C} -\eqref{variance}, then we have}
\begin{align}
    \Phi\le&\Gamma^TM(q)^{-1}u+\eta_{qmax}m_{max}(3c_{max}v_{max}^2+g_{max})\nonumber\\&\eta_{qmax}m_{max}(\left\|\mu(x)\right\|+\lambda)+3\eta_{max^2}v_{max}^2\nonumber\\&+\sqrt{3}\gamma\frac{\partial \beta_1}{\partial z}\eta_{max}v_{max} \label{phi}
\end{align}

In order to ensure \eqref{the1:condition} holds, $\Phi\le0$ must be guaranteed. As $-\sqrt{3}\eta_{max}m_{max}u_{max}\le\Gamma^TM(q)^{-1}u\le\sqrt{3}\eta_{max}m_{max}u_{max}$, the limited actuation that system \eqref{system} can offer to ensure $\Phi\le0$ is $-\eta_{max}m_{max}u_{max}$. We make the following assumption to ensure system \eqref{system} has sufficient actuation capability.
\begin{assumption}
    The system has sufficient effort such that $u_{max}\ge\frac{\xi}{\sqrt{3}\eta_{qmax}m_{max}}$ with $\xi :=3\eta_{max^2}v_{max}^2+\eta_{qmax}m_{max}(c_{max}v_{max}^2+g_{max}+\left\|\mu(x)\right\|+\bar\lambda)$. 
\end{assumption}
Then, obviously if $\gamma$ is sufficiently small, $\Phi\le0$ can be guaranteed. we defined the maximum value of $\gamma$ as:
\begin{align}
    \gamma^{\ast}:=\min_{q\in\mathcal{Z}\cap\mathcal{C},v_i\in\mathcal{C}\cap\mathcal{V}_i,i\in\mathbb{N}_n} \frac{\sqrt{3}\eta_{qmax}m_{max}u_{max}-\xi}{\sqrt{3}\frac{\partial \beta_1}{\partial z}\eta_{max}v_{max}}.
\end{align}

As long as $\gamma\le\gamma^{\ast}$, $\Phi\le0$ can be guaranteed when $h(x)=0$, which implies that $\dot h(x)\ge0$ (as $\beta_2(0)=0$).
\subsection{Velocity constraints}
\revise{We address velocity constraints in this section, and} deduce the sufficient conditions to enforce \eqref{dbx}, we re-write \eqref{dbx} as follows:
\begin{subequations}
    \begin{align}
    e_i^TM^{-1}(u-Cv-G-d)\le k\beta_3(\overline{b}_i(v_i))\\
    e_i^TM^{-1}(u-Cv-G-d)\ge-k\beta_3(\underline{b}_i(v_i)),
    \end{align}
\end{subequations}
\begin{theorem}
    Given system \eqref{system} satisfying Assumption 1, and continuous differentiable functions \eqref{func:b}, if the following conditions hold for all $i\in\mathbb{N}_n$
    \begin{align}
        e_i^TM^{-1}(u-Cv-G-\mu)\le k\beta_3(\overline{b}_i)-\left\|e_i^TM^{-1}\right\|\bar\lambda \label{b condition 1}\\
        e_i^TM^{-1}(u-Cv-G-\mu)\ge-k\beta_3(\underline{b}_i)+\left\|e_i^TM^{-1}\right\|\bar\lambda, \label{b condition 2}
    \end{align}
    then, the input $u$ renders $\mathcal{V}_i$ forward invariant for all $i\in\mathbb{N}_n$.
\end{theorem}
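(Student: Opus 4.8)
The plan is to mirror exactly the structure used in the proof of Theorem~\ref{the1}, applied now to the two velocity barrier functions $\overline{b}_i(v_i)$ and $\underline{b}_i(v_i)$ instead of $h(x)$. The key observation is that each $b_i$ has relative degree one with respect to the input, so the GP-compensation argument is even simpler: there is no second-derivative term and no class-$\mathcal{K}$ gain on the position like $\gamma$. First I would compute $\dot{\overline{b}}_i(v_i) = -\dot v_i = -e_i^T M(q)^{-1}(u - C(q,v)v - G(q) - d(x))$ using the dynamics~\eqref{system}, and likewise $\dot{\underline{b}}_i(v_i) = e_i^T M(q)^{-1}(u - C(q,v)v - G(q) - d(x))$. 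These are precisely the left-hand sides that appear in the rewritten constraints preceding the theorem statement.

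Next I would split off the model mismatch by writing $d(x) = \mu(x) + (d(x) - \mu(x))$, so that
\begin{align}
    \dot{\underline{b}}_i(v_i) = e_i^T M(q)^{-1}(u - C(q,v)v - G(q) - \mu(x)) - e_i^T M(q)^{-1}(\mu(x) - d(x)).\nonumber
\end{align}
By Cauchy--Schwarz together with Lemma~1 and the state-independent bound~\eqref{statistic bound}, the error term is bounded in magnitude by $\left\|e_i^T M(q)^{-1}\right\|\,\lambda(x) \le \left\|e_i^T M(q)^{-1}\right\|\bar\lambda$. Hence
\begin{align}
    \dot{\underline{b}}_i(v_i) \ge e_i^T M(q)^{-1}(u - C(q,v)v - G(q) - \mu(x)) - \left\|e_i^T M(q)^{-1}\right\|\bar\lambda,\nonumber
\end{align}
and symmetrically, using $-e_i^T M(q)^{-1}(\mu(x)-d(x)) \le \left\|e_i^T M(q)^{-1}\right\|\bar\lambda$,
\begin{align}
    \dot{\overline{b}}_i(v_i) \ge -e_i^T M(q)^{-1}(u - C(q,v)v - G(q) - \mu(x)) - \left\|e_i^T M(q)^{-1}\right\|\bar\lambda.\nonumber
\end{align}

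Then I would simply invoke the hypotheses~\eqref{b condition 1} and~\eqref{b condition 2}. Substituting~\eqref{b condition 2} into the lower bound on $\dot{\underline{b}}_i(v_i)$ gives $\dot{\underline{b}}_i(v_i) \ge -k\beta_3(\underline{b}_i) + \left\|e_i^T M^{-1}\right\|\bar\lambda - \left\|e_i^T M^{-1}\right\|\bar\lambda = -k\beta_3(\underline{b}_i(v_i))$, and substituting~\eqref{b condition 1} into the lower bound on $\dot{\overline{b}}_i(v_i)$ gives $\dot{\overline{b}}_i(v_i) \ge -k\beta_3(\overline{b}_i(v_i))$. These are exactly the barrier conditions~\eqref{dbx}, so by the standard forward-invariance / comparison-lemma argument recalled in the Preliminaries, the set $\mathcal{V}_i$ is rendered forward invariant for each $i\in\mathbb{N}_n$. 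I do not expect any real obstacle here; the only thing to be careful about is the sign bookkeeping, since $\overline{b}_i$ and $\underline{b}_i$ carry opposite signs of $\dot v_i$, so the compensation term $\left\|e_i^T M^{-1}\right\|\bar\lambda$ is \emph{subtracted} in the $\overline{b}_i$ inequality but the matching condition~\eqref{b condition 1} already has it subtracted on the right-hand side, and conversely for $\underline{b}_i$. Making sure these cancel rather than add is the whole content of the proof.
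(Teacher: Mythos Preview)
Your proposal is correct and mirrors the paper's own proof essentially step for step: differentiate $\overline{b}_i$ and $\underline{b}_i$, split $d=\mu+(d-\mu)$, bound the mismatch term by $\left\|e_i^T M^{-1}\right\|\bar\lambda$ via Lemma~1 and~\eqref{statistic bound}, and then invoke~\eqref{b condition 1}--\eqref{b condition 2} to obtain $\dot{\overline{b}}_i\ge -k\beta_3(\overline{b}_i)$ and $\dot{\underline{b}}_i\ge -k\beta_3(\underline{b}_i)$. The sign bookkeeping you flagged is the only subtlety, and you handled it correctly.
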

\begin{proof}
    Similarly to the proof of Theorem 1, we differentiate $\overline{b}_i(v_i)$ and $\underline{b}_i(v_i)$ with respect to time, and combine them with \eqref{b condition 1} and \eqref{b condition 2}.
    \begin{align}
        \dot{\overline{b}}_i(v_i)=&-e_i^TM^{-1}(u-Cv-G-\mu+\mu-d)\nonumber\\
        \ge&-e_i^TM^{-1}(u-Cv-G-\mu)-\left\|e_i^TM^{-1}\right\|\bar\lambda\nonumber\\
        \ge&-k\beta_3(\overline{b}_i),
    \end{align}
    \begin{align}
        \dot{\underline{b}}_i(v_i)=&e_i^TM^{-1}(u-Cv-G-\mu+\mu-d)\nonumber\\
        \ge&e_i^TM^{-1}(u-Cv-G-\mu)-\left\|e_i^TM^{-1}\right\|\bar\lambda\nonumber\\
        \ge&-k\beta_3(\underline{b}_i).
    \end{align}
    Then, as long as \eqref{b condition 1} and \eqref{b condition 2} hold, the forward invariance of $\mathcal{V}_i$ for all $i\in\mathbb{N}_n$ can be guaranteed.
\end{proof}
Intuitively, with the increasing of $k$, it is easier to satisfy conditions \eqref{b condition 1} and \eqref{b condition 2} when $v_i$ is sufficiently far away from the boundary $v_{max}$ and $v_{min}$. 

Since conditions that parameters $\gamma$ and $\delta$ need to satisfy have been defined, i.e. $\gamma\le\gamma^{\ast}$ and $\delta\ge\delta^{\ast}$, the task of safety filter is to find a a control input which can simultaneously ensure that \eqref{the1:condition}, \eqref{b condition 1}, and \eqref{b condition 2} are valid. \revise{ The following assumption holds for the parameters $\gamma$, $\delta$ and $k$:}
\begin{assumption}
    As long as parameters satisfy $\gamma\le\gamma^{\ast}$, $\delta\ge\delta^{\ast}$ and $k\ge k^{\ast}$, there always exists control input $u_i\in\mathcal{U}$ for all $i\in\mathbb{N}_m$ that can simultaneously ensure \eqref{the1:condition}, \eqref{b condition 1} and \eqref{b condition 2} hold for all $i\in\mathbb{N}_n$.
\end{assumption}

\revise{Additional conditions to guarantee that there is no conflict between \eqref{b condition 1}, \eqref{b condition 2} might need to be enforced, which is subject of future research.}

\subsection{Main results}
\revise{Based on the above analysis regarding singularity and velocity constraints, we are now concluding the main results of this paper as the following theorem.}
\begin{theorem}
    Consider system \eqref{system} with singularity constraints, velocity constraints and actuator constraints defined by \eqref{configuration constraint}, \eqref{velocity constraint} and \eqref{input constraint}, and satisfying Assumptions 1, 2, 3, and 4. Let functions $h(x),\ \overline{b}_i(v(i))$ and $\underline{b}_i(v(i))$ be defined by \eqref{func:h} and \eqref{func:b} with three extended class-$\mathcal{K}$ functions. If the parameters satisfy $\gamma\le\gamma^{\ast}$ and $\delta\ge\delta^{\ast}$, then there always exists control input $u_i\in\mathcal{U}$ for all $i\in\mathbb{N}_m$ that can simultaneously ensure \eqref{the1:condition}, \eqref{b condition 1} and \eqref{b condition 2} hold for all $i\in\mathbb{N}_n$, i.e., $\exists u$ that renders sets $\mathcal{Z}\cap\mathcal{C}$ and $\mathcal{C}\cap\mathcal{V}_i, \forall i\in\mathbb{N}_n$ forward invariant.
\end{theorem}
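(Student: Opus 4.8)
The plan is to assemble the statement from pieces that are already in place: Lemma~\ref{lemma3} together with the ensuing $\gamma^{\ast}$/$\delta^{\ast}$ discussion for the singularity constraint, Theorem~2 for the velocity constraints, the joint‑feasibility assumption (Assumption~5) for compatibility under the actuator box $\mathcal{U}$, and a comparison‑lemma argument to pass from forward invariance of $\mathcal{C}$ to that of $\mathcal{Z}$. Concretely, the proof is a three‑step reassembly: (i) show $\gamma\le\gamma^{\ast}$ and $\delta\ge\delta^{\ast}$ make \eqref{the1:condition} feasible within $\mathcal{U}$; (ii) invoke joint feasibility to get a single $u$ also satisfying \eqref{b condition 1}--\eqref{b condition 2}; (iii) turn the satisfied inequalities into forward invariance of the three sets and intersect.

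\emph{Step~1 (feasibility of the singularity condition).} I would first verify that $\gamma\le\gamma^{\ast}$ and $\delta\ge\delta^{\ast}$ guarantee that at every admissible state ($q\in\mathcal{Z}\cap\mathcal{C}$, $v_i\in\mathcal{C}\cap\mathcal{V}_i$) some $u\in\mathcal{U}$ enforces \eqref{the1:condition}. This splits into two cases. If $h(x)>0$, then $\delta\ge\delta^{\ast}$ gives $\delta\beta_2(h(x))\ge\Psi$, so the sufficient condition \eqref{sufficient condition} of Lemma~\ref{lemma3} holds and \eqref{the1:condition} is feasible with some $u\in\mathcal{U}$. If $h(x)=0$, then $\beta_2(h(x))=0$ and the requirement collapses to $\Phi\le0$; keeping $\Gamma^TM(q)^{-1}u$ intact and bounding the remaining terms via \eqref{C}--\eqref{variance} yields the estimate \eqref{phi}, and the worst‑case actuation $\Gamma^TM(q)^{-1}u\ge-\sqrt{3}\eta_{qmax}m_{max}u_{max}$ together with Assumption~4 (sufficient effort) and $\gamma\le\gamma^{\ast}$ forces $\Phi\le0$. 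Hence in both cases \eqref{the1:condition} admits a feasible $u\in\mathcal{U}$.

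\emph{Steps~2--3 (velocity constraints, invariance of $\mathcal{Z}$, conclusion).} Theorem~2 shows that whenever \eqref{b condition 1}--\eqref{b condition 2} hold the sets $\mathcal{V}_i$ are forward invariant. Taking in addition $k\ge k^{\ast}$, as Assumption~5 presupposes, Assumption~5 supplies a single $u\in\mathcal{U}$ satisfying \eqref{the1:condition}, \eqref{b condition 1} and \eqref{b condition 2} simultaneously at every admissible state; for this step I would simply invoke the assumption rather than argue joint feasibility from scratch. Fixing this $u$, Theorem~\ref{the1} renders $\mathcal{C}$ forward invariant, so along any trajectory starting in $\mathcal{Z}\cap\mathcal{C}$ one has $h(x(t))\ge0$ for all $t\ge0$, i.e. $\dot z(q(t))\ge-\gamma\beta_1(z(q(t)))$. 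Since $\beta_1$ is extended class‑$\mathcal{K}$, $w=0$ is an equilibrium of $\dot w=-\gamma\beta_1(w)$, and the comparison lemma gives $z(q(t))\ge0$ whenever $z(q(t_0))\ge0$; hence $\mathcal{Z}$ is forward invariant. By Theorem~2 the same $u$ keeps each $\mathcal{V}_i$ invariant, so intersecting the invariant sets under the common closed‑loop dynamics shows $\mathcal{Z}\cap\mathcal{C}$ and $\mathcal{C}\cap\mathcal{V}_i$, $\forall i\in\mathbb{N}_n$, are forward invariant, which is the claim.

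The main obstacle is Step~2: the \emph{joint} feasibility of the singularity condition and the two velocity conditions under the hard actuator set $\mathcal{U}$. Steps~1 and~3 are essentially bookkeeping on top of Lemma~\ref{lemma3}, Theorem~\ref{the1} and Theorem~2, but the compatibility of three half‑space constraints on $u$ inside a box is genuinely non‑trivial; the proof circumvents it by absorbing it into Assumption~5 and explicitly flags the missing non‑conflict conditions as future research.
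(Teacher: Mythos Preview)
Your proposal is correct and follows essentially the same assembly as the paper's own proof: Lemma~\ref{lemma3} with the $\gamma^{\ast}/\delta^{\ast}$ discussion for feasibility of \eqref{the1:condition}, Theorem~2 for the velocity constraints, and Assumption~5 to obtain a single $u\in\mathcal{U}$ enforcing all three conditions simultaneously. Your version is in fact slightly more explicit than the paper's, since you spell out the $h(x)>0$ versus $h(x)=0$ case split and add the comparison-lemma step to pass from forward invariance of $\mathcal{C}$ to that of $\mathcal{Z}$, which the paper leaves implicit.
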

\begin{proof}
    \revise{Based on Assumption 1, the unknown function $\left\|d_i(x)\right\|_{k_i}\le B_i,\forall i\in\mathbb{N}_n$. Then, according to Lemma 2, the prediction error of GP regression is bounded by a state-independent constant $\bar\lambda$. In lights of Lemma 3, as long as we choose $\gamma\le\gamma^{\ast}$ and $\delta\ge\delta^{\ast}$, there exists $u_i\in\mathcal{U},\ \forall i\in\mathbb{N}_n$ that can ensure \eqref{the1:condition} holds (and thus $\mathcal{C}$ is forward invariant). Moreover, Theorem 2 guarantees that there exists $u_i\in\mathcal{U},\forall i\in\mathbb{N}_n$ that can ensure $\mathcal{V}_i,\ \forall i\in\mathbb{N}_n$ is forward invariant by appropriately selecting parameter $k$. Assumption 5 ensures that there exists $u_i\in\mathcal{U},\ \forall i\in\mathbb{N}_n$ that can simultaneously enforce \eqref{the1:condition}, \eqref{b condition 1} and \eqref{b condition 2}. Accordingly, $\exists u$ renders sets $\mathcal{Z}\cap\mathcal{C}$ and $\mathcal{C}\cap\mathcal{V}_i, \forall i\in\mathbb{N}_n$ forward invariant.}
\end{proof}
\revise{Finally, we re-write the optimization problem \eqref{optimization} as follows:}
\begin{align}
    u^\ast = \text{arg}&\min_{u\in\mathcal{U}}\left\| u-u_{nom}\right\|^2 \nonumber\\
    &\ \ \text{s.t.}\ \eqref{the1:condition}, \eqref{b condition 1}\ \text{and}\ \eqref{b condition 2},\nonumber\\
    &\qquad\ u_i\in\mathcal{U},\forall i\in\mathbb{N}_n
\end{align}

\section{NUMERICAL VERIFICATION}
\begin{figure}[!t]
  \centering
  \includegraphics[width=0.8\linewidth]{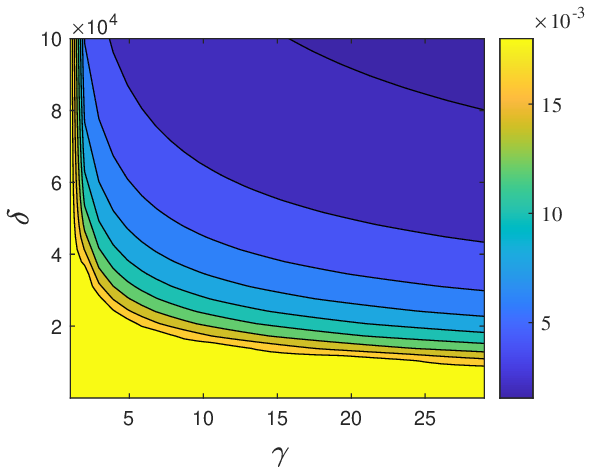}
  \caption{The impact of $\gamma$ and $\delta$ on the singularity constraint $z_{min}(q)$}
  \label{contourf}
\end{figure}
In this section, we employ a 2 DoFs planar manipulator \revise{(as shown in Fig. \ref{robots} (a))} with two identical links as example to validate the effectiveness of the proposed CBFs design approach for singularity avoidance. High-fidelity simulations are conducted on Simscape. The radius, length and density of links are set to be 0.01 m, 0.5 m, $7.8\times 10^3\text{kg}/\text{m}^3$, respectively. As it is a planar manipulator, $G(q)$ = 0. The actuation capacities of motors are set to be $u_{max}=-u_{min}=5$ Nm, while the velocity constraints are $v_{max}=-v_{min}=2$ rad/s. The hard constraints of joint angles are set to be $q_{max}=-q_{min}=\pi/3$. Three extended class-$\mathcal{K}$ functions are selected as: $\beta_1(z)=z$, $\beta_2(h)=h^3$ and $\beta_3(\overline{b}_i)=\tan^{-1}(\overline{b}_i)$ (i.e., $\beta_3(\underline{b}_i)=\tan^{-1}(\underline{b}_i)$). We attach a lumped mass at the end of second link with $m=0.2$ kg, and view it as unknown. GP regression with  squared-exponential kernel $k_i=sf^2\exp(\left\|x_1-x_2\right\|^2/el^2)$ with $sf = 0.01$, $el=1$ and $\sigma_v^2=0.001$ is leveraged to learn the unknown $d_i(x)$ for all $i\in\mathbb{N}_n$ . The size of dataset is chosen as $M=200$. Then, we can directly calculate $\bar\lambda=3.52$. The Jacobian matrix of the 2 DoFs manipulator is obtained as:
\begin{align}
    J=\begin{bmatrix}
-l_1\sin q_1-l_2\sin q_{12} &-l_2\sin q_{12}  \\
 l_1\cos q_1+l_2\cos q_{12}& l2\cos q_{12}
\end{bmatrix}
\end{align}
where $q_{12}=q_1+q_2+q_{ini}$. $l_1$ and $l_2$ denote the length of two links, respectively. $q_{ini}$ denotes an user-defined initial angle for the second link. Obviously, when these two links are parallel, $\det(J)=0$, i.e., the robot is under singular configuration. Accordingly, we define the unit direction vectors of two links as: $f(q)=[\cos(q_1),\ \sin(q_1),\ 0]^T$ and $g(q)=[\cos(q_1+q_2+q_{ini}),\ \sin(q_1+q_2+q_{ini}),\ 0]^T$.

In order to deduce the corresponding value of $\gamma^{\ast}$ and $\delta^{\ast}$, we first calculate $m_{max}$ and $c_{max}$. We can simply choose $m_{max}$ to be the maximum eigenvalue of $M(q)^{-1}$ for $q\in[q_{min},\ q_{max}]$, which is $m_{max}=49.246$. We can similarly calculate $c_{max}=0.243$. As $\beta_1(z)=z$, we have $\frac{\partial \beta_1}{\partial z}=1$. Then, We can directly calculate $\gamma^{\ast}=29.987$. We choose $\gamma$ slightly smaller than $\gamma^{\ast}$ to be $\gamma = 29$. $\delta^{\ast}=5.924$ is solved by Matlab function $fmincon$ where $q \in \mathcal{Z}\cap\mathcal{C}$ and $v_i\in\mathcal{C}\cap\mathcal{V}_i, \forall i\in\mathbb{N}_n$. We choose PID controller as the nominal controller with three parameters as: $k_p=200,\ k_i=200,\ k_v=10$. First of all, we analyze the impact of $\gamma$ and $\delta$ on the singularity constraint $z_{min}(q)$ \revise{where $z_{min}(q)$  is defined as the minimum $z(q)$ during the entire trajectory}, which reflects the degree of conservatism of CBFs. As shown in Fig. \ref{contourf}, as $\gamma$ and $\delta$ increase, the minimal value of $z(q)$ decreases, which implies that the control inputs drive the system states closer to the boundary of the singularity constraints $z(q)=0$ but never exceed it. Conversely, when $\gamma$ and $\delta$ are small, the system tends to move away from the constraint boundary. Therefore, increasing these parameters helps to reduce the conservatism of the CBFs.

By choosing $\gamma=29$ and $\delta=10^5$, the trajectory tracking results are shown in Fig. \ref{tracking}. In contrast to the results without compensation from GP regression, where neither singularity constraints nor velocity constraints can be satisfied due to the presence of model mismatch, all constraints are ensured by leveraging GP regression. Fig. \ref{z} illustrates a detailed change process of the singularity constraints, in which we can find that $z(q)\ge0$ is consistently guaranteed for GP based CBFs, whereas $z(q)<0$ (indicating a singularity configuration) occurs around 6.5 s without GP regression. The results also demonstrate the effectiveness of the proposed CBFs design methodology for singularity constraints subject to actuator constraints. \new{While the impact of CBFs on trajectory tracking accuracy is beyond the scope of this paper, it is noted that the system's tracking accuracy decreases near the constraint boundary due to the conservativeness introduced by CBFs. We will consider solutions to this issue in the future.}
\begin{figure}[!t]
  \centering
  \subfloat[]{\includegraphics[width=0.5\linewidth]{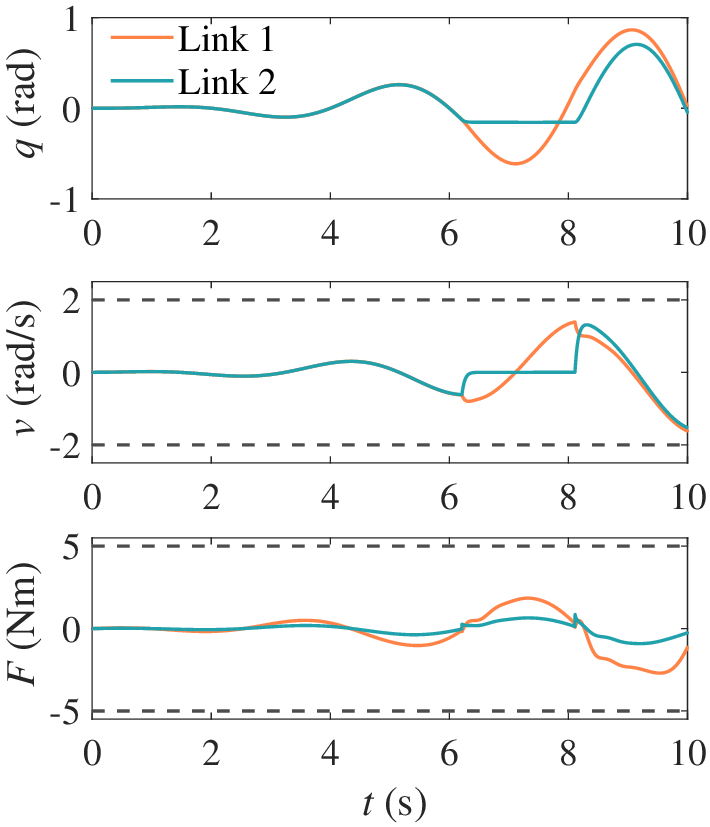}\label{}}
  \subfloat[]{\includegraphics[width=0.5\linewidth]{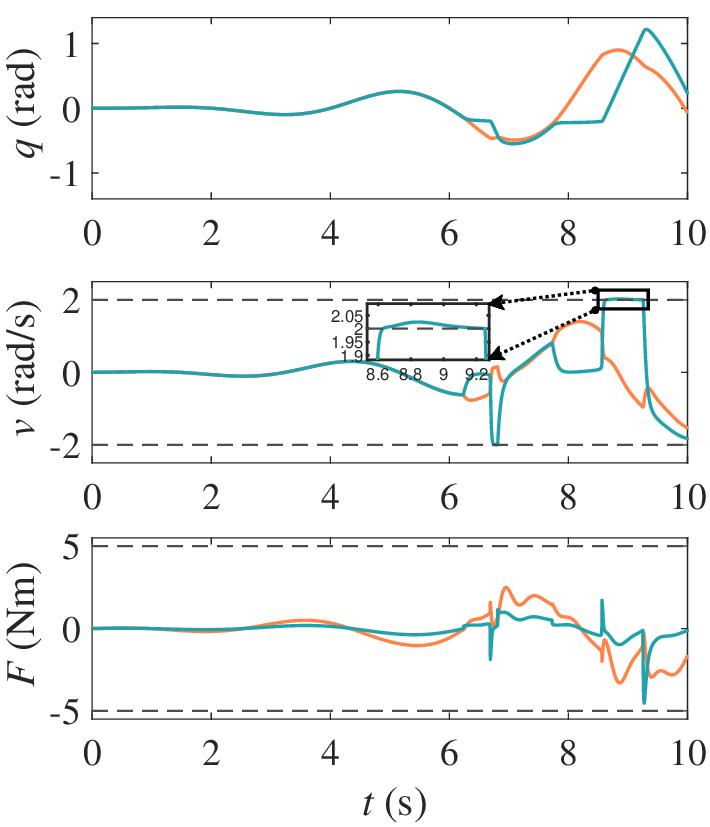}\label{}}
  \caption{Comparison of trajectory tracking results. (a) with GP regression. (b) without GP regression.}
  \label{tracking}
\end{figure}
\begin{figure}[!t]
  \centering
  \subfloat[]{\includegraphics[width=0.5\linewidth]{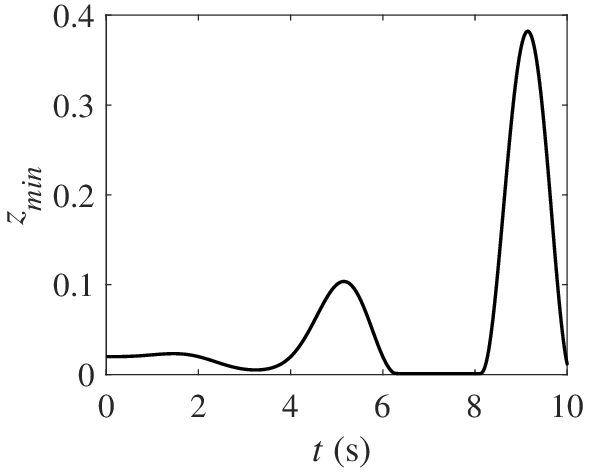}\label{Com:re}}
  \subfloat[]{\includegraphics[width=0.5\linewidth]{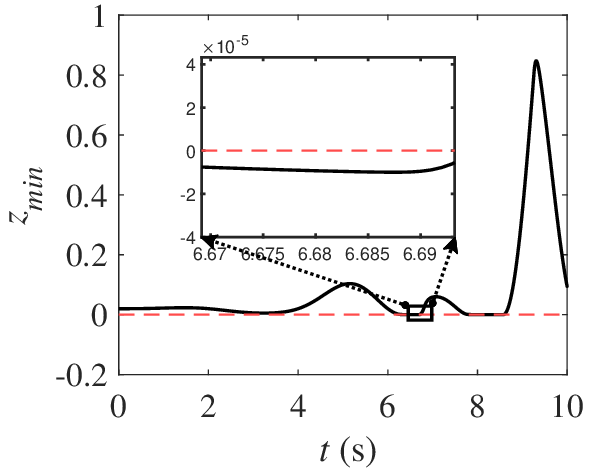}\label{Com:acc}}
  \caption{Comparison of singularity constraints. (a) with GP regression. (b) without GP regression.}
  \label{z}
\end{figure}
\section{CONCLUSION}
We proposed a methodology for CBFs design to address singularity avoidance problem in robotic systems subject to model mismatch and actuator constraints. The feasibility of CBFs under actuator constraints was guaranteed by the parameter selection criterion, and model mismatch was learned by GP regression. High-fidelity simulations validated the effectiveness of the proposed approach.

Several challenges remain to be addressed in the future. First, a universal method that encompasses all singularity configurations should be adopted. Additionally, the use of GP regression in this paper introduces a conservative condition for CBFs, which should be relaxed in future work.
{\appendix[]
\subsection{Appendix A}
The first derivative of $\eta$ is given as
    \begin{align}
    \left\| \frac{\partial \eta}{\partial q} \right\|=&\left\| \sum_{i=1}^{3}\left( \frac{\partial f_i(q)}{\partial q}g_i(q)+f_i(q)\frac{\partial g_i(q)}{\partial q} \right) \right\|\nonumber\\
    \le&\sum_{i=1}^{3}\left\| \frac{\partial f_i(q)}{\partial q}g_i(q) \right\|+\left\| f_i(q)\frac{\partial g_i(q)}{\partial q} \right\|\nonumber\\
    \le&3(f_q+g_q).
\end{align}

The second derivative of $\eta$ is deduced as
\begin{align}
     &\left\| \frac{\partial^2 \eta}{\partial q^2} \right\|=\left\| \sum_{i=1}^{3}\left( \frac{\partial^2 f_i}{\partial q^2}g_i+2\frac{\partial f_i}{\partial q}\left( \frac{\partial g_i}{\partial q} \right)^T+f_i\frac{\partial^2 g_i}{\partial q^2} \right) \right\|\nonumber\\
    &\le\sum_{i=1}^{3}\left( \left\| \frac{\partial^2 f_i}{\partial q^2}g_i \right\|+2\left\| \frac{\partial f_i}{\partial q}\left( \frac{\partial g_i}{\partial q} \right)^T \right\| +\left\| f_i\frac{\partial^2 g_i}{\partial q^2} \right\| \right)\nonumber\\
    &\le3\left( f_{q^2} +2f_qg_q+g_{q^2}\right).
\end{align}}
\bibliographystyle{IEEEtran}
\bibliography{myrefs}

\begin{thebibliography}{10}
\providecommand{\url}[1]{#1}
\csname url@samestyle\endcsname
\providecommand{\newblock}{\relax}
\providecommand{\bibinfo}[2]{#2}
\providecommand{\BIBentrySTDinterwordspacing}{\spaceskip=0pt\relax}
\providecommand{\BIBentryALTinterwordstretchfactor}{4}
\providecommand{\BIBentryALTinterwordspacing}{\spaceskip=\fontdimen2\font plus
\BIBentryALTinterwordstretchfactor\fontdimen3\font minus \fontdimen4\font\relax}
\providecommand{\BIBforeignlanguage}[2]{{%
\expandafter\ifx\csname l@#1\endcsname\relax
\typeout{** WARNING: IEEEtran.bst: No hyphenation pattern has been}%
\typeout{** loaded for the language `#1'. Using the pattern for}%
\typeout{** the default language instead.}%
\else
\language=\csname l@#1\endcsname
\fi
#2}}
\providecommand{\BIBdecl}{\relax}
\BIBdecl

\bibitem{li2023effective}
G.~Li, H.~Liu, T.~Huang, J.~Han, and J.~Xiao, ``An effective approach for non-singular trajectory generation of a 5-dof hybrid machining robot,'' \emph{Robotics and Computer-Integrated Manufacturing}, vol.~80, p. 102477, 2023.

\bibitem{pulloquinga2023type}
J.~L. Pulloquinga, R.~J. Escarabajal, {\'A}.~Valera, M.~Vall{\'e}s, and V.~Mata, ``A type ii singularity avoidance algorithm for parallel manipulators using output twist screws,'' \emph{Mechanism and Machine Theory}, vol. 183, p. 105282, 2023.

\bibitem{wabersich2023data}
K.~P. Wabersich, A.~J. Taylor, J.~J. Choi, K.~Sreenath, C.~J. Tomlin, A.~D. Ames, and M.~N. Zeilinger, ``Data-driven safety filters: Hamilton-jacobi reachability, control barrier functions, and predictive methods for uncertain systems,'' \emph{IEEE Control Systems Magazine}, vol.~43, no.~5, pp. 137--177, 2023.

\bibitem{katriniok2022control}
A.~Katriniok, ``Control-sharing control barrier functions for intersection automation under input constraints,'' in \emph{2022 European Control Conference (ECC)}.\hskip 1em plus 0.5em minus 0.4em\relax IEEE, 2022, pp. 1--7.

\bibitem{cortez2020correct}
W.~S. Cortez and D.~V. Dimarogonas, ``Correct-by-design control barrier functions for euler-lagrange systems with input constraints,'' in \emph{2020 American Control Conference (ACC)}.\hskip 1em plus 0.5em minus 0.4em\relax IEEE, 2020, pp. 950--955.

\bibitem{xiao2023barriernet}
W.~Xiao, T.-H. Wang, R.~Hasani, M.~Chahine, A.~Amini, X.~Li, and D.~Rus, ``Barriernet: Differentiable control barrier functions for learning of safe robot control,'' \emph{IEEE Transactions on Robotics}, vol.~39, no.~3, pp. 2289--2307, 2023.

\bibitem{sforni2024receding}
L.~Sforni, G.~Notarstefano, and A.~D. Ames, ``Receding horizon cbf-based multi-layer controllers for safe trajectory generation,'' in \emph{2024 American Control Conference (ACC)}.\hskip 1em plus 0.5em minus 0.4em\relax IEEE, 2024, pp. 4765--4770.

\bibitem{nguyen2021robust}
Q.~Nguyen and K.~Sreenath, ``Robust safety-critical control for dynamic robotics,'' \emph{IEEE Transactions on Automatic Control}, vol.~67, no.~3, pp. 1073--1088, 2021.

\bibitem{jagtap2020control}
P.~Jagtap, G.~J. Pappas, and M.~Zamani, ``Control barrier functions for unknown nonlinear systems using gaussian processes,'' in \emph{2020 59th IEEE Conference on Decision and Control (CDC)}.\hskip 1em plus 0.5em minus 0.4em\relax IEEE, 2020, pp. 3699--3704.

\bibitem{fisac2018general}
J.~F. Fisac, A.~K. Akametalu, M.~N. Zeilinger, S.~Kaynama, J.~Gillula, and C.~J. Tomlin, ``A general safety framework for learning-based control in uncertain robotic systems,'' \emph{IEEE Transactions on Automatic Control}, vol.~64, no.~7, pp. 2737--2752, 2018.

\bibitem{balta2021learning}
E.~C. Balta, K.~Barton, D.~M. Tilbury, A.~Rupenyan, and J.~Lygeros, ``Learning-based repetitive precision motion control with mismatch compensation,'' in \emph{2021 60th IEEE Conference on Decision and Control (CDC)}.\hskip 1em plus 0.5em minus 0.4em\relax IEEE, 2021, pp. 3605--3610.

\bibitem{kurtz2021control}
V.~Kurtz, P.~M. Wensing, and H.~Lin, ``Control barrier functions for singularity avoidance in passivity-based manipulator control,'' in \emph{2021 60th IEEE Conference on Decision and Control (CDC)}.\hskip 1em plus 0.5em minus 0.4em\relax IEEE, 2021, pp. 6125--6130.

\bibitem{hashimoto2022learning}
K.~Hashimoto, A.~Saoud, M.~Kishida, T.~Ushio, and D.~V. Dimarogonas, ``Learning-based symbolic abstractions for nonlinear control systems,'' \emph{Automatica}, vol. 146, p. 110646, 2022.

\bibitem{scharnhorst2022robust}
P.~Scharnhorst, E.~T. Maddalena, Y.~Jiang, and C.~N. Jones, ``Robust uncertainty bounds in reproducing kernel hilbert spaces: A convex optimization approach,'' \emph{IEEE Transactions on Automatic Control}, vol.~68, no.~5, pp. 2848--2861, 2022.

\bibitem{liu2012new}
X.-J. Liu, C.~Wu, and J.~Wang, ``{A New Approach for Singularity Analysis and Closeness Measurement to Singularities of Parallel Manipulators},'' \emph{Journal of Mechanisms and Robotics}, vol.~4, no.~4, p. 041001, 08 2012.

\bibitem{lin2016improving}
Z.~Lin, J.~Fu, H.~Shen, G.~Xu, and Y.~Sun, ``Improving machined surface texture in avoiding five-axis singularity with the acceptable-texture orientation region concept,'' \emph{International Journal of Machine Tools and Manufacture}, vol. 108, pp. 1--12, 2016.

\bibitem{khosravi2022safety}
M.~Khosravi, C.~K{\"o}nig, M.~Maier, R.~S. Smith, J.~Lygeros, and A.~Rupenyan, ``Safety-aware cascade controller tuning using constrained bayesian optimization,'' \emph{IEEE Transactions on Industrial Electronics}, vol.~70, no.~2, pp. 2128--2138, 2022.

\bibitem{choi2023constraint}
J.~J. Choi, F.~Castaneda, W.~Jung, B.~Zhang, C.~J. Tomlin, and K.~Sreenath, ``Constraint-guided online data selection for scalable data-driven safety filters in uncertain robotic systems,'' \emph{arXiv preprint arXiv:2311.13824}, 2023.

\end{thebibliography}

\end{document}